%
%
%
%
\documentclass[runningheads,12pt]{llncs}
\usepackage{amsmath,amssymb,amstext}
\setcounter{tocdepth}{3}
\usepackage{graphicx}
\usepackage{subfig}
\usepackage{cite}
\usepackage{url}
\usepackage{tabularx}
\usepackage[letterpaper,hmargin=1in,vmargin=1.25in]{geometry}
\urldef{\email}\path|aysajan@isy.liu.se,jan-ake.larsson@liu.se|
\newcommand{\keywords}[1]{\par\addvspace\baselineskip
\noindent\keywordname\enspace\ignorespaces#1}

\allowdisplaybreaks

\usepackage{tikz}
\usetikzlibrary{calc,shapes.misc}
\usepackage{latexsym}

\renewcommand{\epsilon}{\varepsilon}
\newcommand{\E}{\text E}

\newcommand{\Xtilde}{\widetilde{X}}
\newcommand{\Xhat}{\widehat{X}}

\newcommand{\mZ}{\mathcal{Z}}
\newcommand{\mA}{\mathcal{A}}
\newcommand{\mS}{\mathcal{S}}

\newcommand{\mH}{\mathcal{H}}
\newcommand{\mM}{\mathcal{M}}
\newcommand{\mT}{\mathcal{T}}
\newcommand{\mK}{\mathcal{K}}
\newcommand{\mF}{\mathcal{F}}
\newcommand{\mX}{\mathcal{X}}
\newcommand{\tr}{\delta}


\begin{document}
\mainmatter
\title{Direct Proof of Security of Wegman-Carter\\ Authentication  
  with Partially Known Key}
\titlerunning{Direct Proof of Security of Wegman-Carter Auth.\ with
  Partially Known Key}
\author{ Aysajan Abidin and Jan-{\AA}ke Larsson }
\authorrunning{A. Abidin and J.-{\AA}. Larsson}
\institute{Department of Electrical Engineering, \\
  Link\"oping University, SE-581 83 Link\"oping, Sweden\\
  \email}

\maketitle

\begin{abstract}
  Information-theoretically secure (ITS) authentication is needed in
  Quantum Key Distribution (QKD).  In this paper, we study security of
  an ITS authentication scheme proposed by Wegman\&Carter, in the case
  of partially known authentication key. This scheme uses a new
  authentication key in each authentication attempt, to select a hash
  function from an Almost Strongly Universal$_2$ hash function family.
  The partial knowledge of the attacker is measured as the trace
  distance between the authentication key distribution and the uniform
  distribution; this is the usual measure in QKD.  We provide direct
  proofs of security of the scheme, when using partially known key,
  first in the information-theoretic setting and then in terms of
  witness indistinguishability as used in the Universal Composability
  (UC) framework. We find that if the authentication procedure has a
  failure probability $\epsilon$ and the authentication key has an
  $\epsilon'$ trace distance to the uniform, then under ITS, the
  adversary's success probability conditioned on an authentic
  message-tag pair is only bounded by $\epsilon+|\mT|\epsilon'$, where
  $|\mT|$ is the size of the set of tags. Furthermore, the trace
  distance between the authentication key distribution and the uniform
  increases to $|\mT|\epsilon'$ after having seen an authentic
  message-tag pair. Despite this, we are able to prove directly that
  the authenticated channel is indistinguishable from an (ideal)
  authentic channel (the desired functionality), except with
  probability less than $\epsilon+\epsilon'$. This proves that the
  scheme is ($\epsilon+\epsilon'$)-UC-secure, without using the
  composability theorem.
  
 \keywords{Authentication, Strongly Universal hash functions, Partially known
  key, Trace distance, Universal Composability, Quantum Key
  Distribution.}
\end{abstract}

\section{Introduction}
\label{sec:introduction}

Information-theoretically secure (ITS) message authentication codes
\cite{WC1,WC2} provide two users, Alice and Bob, with means to
guarantee authenticity and integrity of messages exchanged over an
insecure public channel. To achieve ITS (sometimes called
unconditional security) the schemes used need shared secret between
Alice and Bob.  This procedure is secure against any adversary, even
with unlimited computing and storage capability, provided that the key
is perfectly secret. Such schemes normally have high demand for fresh
secret key material, but even so they are used in some cryptographic
schemes; especially in ITS key agreement schemes such as Quantum 
Key Distribution (QKD) \cite{BB84,Ekert91}.  QKD needs ITS
authentication in order to thwart man-in-the-middle attacks
\cite{BB84,AL09,Abidin,PAL12}.

This paper addresses security of an ITS Authentication scheme
originally proposed by Wegman and Carter \cite{WC2}, in the case of
partially known key.  The scheme is based on secretly selecting a
function from a certain family of functions, details will be given in
what follows. The function is then used to create a message
authentication code, a tag, from the message. The important property
of the family in question is that revealing the output, the tag, from
one single use of a function does not reveal too much information on
which function is used. This is to prohibit an attacker from
identifying the function used, to generate a tag for another (forged)
message. However, revealing two tags for two different messages 
may reveal enough to generate a tag for a third, so the function 
cannot be reused. Several messages can be authenticated securely by
secretly selecting a new function for each desired authentication; we
will refer to this mode of operation as WCA.  Another is to hide the
output, by encrypting the tag using one-time pad encryption, but in
this paper, we only consider the WCA scheme.

The WCA scheme is ITS provided that the authentication key is 
uniformly distributed (or perfect). In practice, however, cryptographic 
keys are imperfect if partial information has leaked about them.
One example of this is QKD-generated keys, 
where an eavesdropper can extract some information on the key, tightly
restricted by security parameters of the system.  In this paper, we 
study security of the WCA scheme in the scenario where the key is
partially known to the adversary. We measure the adversary's partial
knowledge of the key as the trace distance between the distribution of
the key and the uniform distribution, as is done in QKD.  We should
stress that our analysis is not just restricted to QKD. The same
analysis applies \emph{whenever} the authentication scheme under
study is used with a key that has a small but non-zero trace distance
to the uniform.

\subsection*{Related work, and contribution of this paper}

The security of the WCA scheme as used in QKD was studied in \cite{CL}
where the observation was made that, for the WCA scheme with partially
known authentication key, an active attack is not always needed to
weaken the system. The attacker can, in essence, wait for a
beneficial moment and only launch an active (guessing) attack at that
moment. The paper also proposes a countermeasure to this that is
simple to implement.

A more recent paper \cite{Portmann} extends the security of the
WCA scheme to the Universally Composable (UC) framework, proving that
the scheme is UC-secure if the authentication key is perfectly secret.
In the same paper, the Composability Theorem \cite{Canetti} is used to
further extend the result to the case with partially known key, but
due to the complexity of the UC framework and the composability theorem, the
existence of the guessing attack mentioned above, and ultimately the
differences between questions of Confidentiality and Integrity, there
has been some discussion as to the meaning and appropriate statement
of this result \cite{OH12,RR12,HY12}.  

In this paper, we aim to resolve
the issue by providing upper bounds for failure probability, both
for the problem discussed in \cite{CL} and for witness indistinguishability as
used in the UC framework. This is done for the case of partially known
key using a direct proof, without using the Composability Theorem.  We
first show that, if the authentication procedure has a failure
probability $\epsilon$; the authentication key has an $\epsilon'$
trace distance to the uniform; and the adversary has seen a valid
message-tag pair, then the adversary's success probability of breaking
the authentication is only bounded by $\epsilon+|\mT|\epsilon'$, where
$|\mT|$ is the size of the tag space. This is \emph{significantly}
larger than what one would expect from the bound emerging from the UC framework.  
Despite this, we are able to prove directly that the authenticated channel is
distinguishable from an authentic channel (the desired functionality)
with probability less than $\epsilon+\epsilon'$.

The structure of the paper is as follows. Some background on
Universal hashing and its use in constructing ITS authentication
will be given in Section~\ref{sec:background}.  In
Section~\ref{sec:trace-distance}, we present some properties of subset
probability from distributions at nonzero trace distance from the
uniform, that are needed in the security proofs. 
The ITS security bound of the scheme when using partially known key is proved 
in Section~\ref{sec:ITS-security}, and the implications of the high bound
is discussed at the end of the section.  In Section~\ref{sec:UC-security}, 
we prove indistinguishability of the scheme from the ideal functionality 
when using partially known key. Section~\ref{sec:conclusions} concludes the paper. 


\section{Background}
\label{sec:background}
In this section we present some necessary background that facilitates
understanding of the whole paper. First of all, we need to specify the
measure of partial knowledge to be used. \smallskip
\begin{definition}[The trace distance] This is also known as the
  variational distance or the statistical distance between two
  probability distributions $P_X$ and $P_X'$, and is
\begin{equation}\label{eq:trace-distance}
  \tr(P_X,P_X') = \tfrac12\sum_{x\in\mX}|P_X(x)-P_X'(x)|.
\end{equation}
\end{definition}
When we discuss security of a key in this paper, the following notion
will be used. \smallskip
\begin{definition}[Perfectness]\label{def:key-security}
  A key $k$ is called \emph{perfect} if it is uniformly distributed
  from the adversary's point of view; a key $k$ is called
  \emph{$\epsilon$-perfect}, if its distribution has an $\epsilon$
  trace distance to the uniform.
\end{definition}\smallskip

The family of functions used to create the tags are defined as
follows.  Let $\mM$ be the set of messages and $\mT$ be the set of
tags, both finite and $\mT$ typically much smaller than
$\mM$. Also, let $\mH$ be a set of functions from $\mM$ to $\mT$. The
appropriate set of functions to use in ITS authentication is the
following.\smallskip
\begin{definition}[Strongly Universal$_2$]
  The set $\mH$ is a \emph{Strongly Universal$_2$ (SU$_2$) hash
    function family} if \textbf{(a)} for any $m_1\in\mM$ and any
  $t_1\in\mT$ there exist exactly $|\mH|/|\mT|$ hash functions
  $h\in\mH$ such that $h(m_1)=t_1$, and \textbf{(b)} for any
  $m_2\in\mM$ (distinct from $m_1$) and any $t_2\in\mT$ (possibly
  equal to $t_1$), the fraction of those functions such that
  $h(m_2)=t_2$ is $1/|\mT|$.  If the fraction in (b) instead is at
  most $\epsilon$, the family $\mH$ is \emph{$\epsilon$-Almost
    Strongly Universal$_2$ ($\epsilon$-ASU$_2$)}.
\end{definition}\smallskip

When proving security of an authentication scheme, there are two
probabilities to bound: the probability of success in an
\emph{impersonation} attack, and the probability of success in a
\emph{substitution} attack. In an impersonation attack, the adversary
pretends to be a legitimate user and tries to generate the correct tag
for a (forged) message with no additional information, as would be
given by a valid message-tag pair. In a substitution attack, the
adversary intercepts a valid message-tag pair and tries to replace it
with a new message-tag pair. This latter attack is more powerful than
the former \cite{Johansson1}.

It is fairly straightforward to see that $\epsilon$-ASU$_2$ hash
functions can be used to construct unconditionally secure
authentication schemes in a natural way.  Let Alice and Bob share a
secret key $k$ to identify a hash function $h_k$ in a family $\mH$ of
$\epsilon$-ASU$_2$ hash functions from $\mM$ to $\mT$.  Alice sends her
message $m$ along with $t=h_k(m)$ to Bob. Upon receiving $m$ and $t$,
Bob verifies the authenticity of $m$ by comparing $h_k(m)$ with
$t$. If $h_k(m)$ and $t$ are identical, then Bob accepts $m$ as
authentic; otherwise, $m$ will be rejected.

Now, if Eve tries to impersonate Alice and sends $m^\prime$
without knowing the key $k$, or $h_k$, the best she can do is to
guess the correct tag for $m^\prime$. The probability of success in
this case is $1/|\mT|$.  Even if Eve waits until seeing a valid message-tag
pair $(m,t)$ from Alice, the probability of guessing the correct tag $t'$ 
for $m^\prime$ is at most $\epsilon$; cf. Def. 3(b). 
In other words, even seeing a valid message-tag pair does not increase
Eve's success probability above $\epsilon$.  Therefore, by 
using a family of $\epsilon$-ASU$_2$ hash functions with suitably
chosen $\epsilon$, one can achieve unconditionally secure
message authentication. 

In this scheme, however, a key cannot be used more than once,
because a repeated use of the same key may give Eve enough information
to forge a valid message-tag pair; Def.\ 3 does not say anything about
set sizes for three message-tag pairs. Therefore, in the mode of
operation considered here, WCA, a new secret key is used for each
authentication. The key length for
typical known families of $\epsilon$-ASU$_2$ hash functions is
logarithmic in the message length $\log|\mM|$ \cite{Stinson1,Stinson2,
Stinson3,Stinson4,AticiStinson,Hugo,Hugo1,Johansson,Johansson1,
Boer,AL11}, where $\log$ denotes the binary
logarithm. Hence, the key-consumption rate of WCA is logarithmic in the
message length.


\section{Probabilities of sets with 
  non-uniform underlying distribution}
\label{sec:trace-distance}

In what follows, we will need some simple results of probabilities of
subsets of key values, or hash functions, when the key is
$\epsilon$-perfect.  In general we denote the probability of a subset
of values $\mX'\subseteq\mX$ by
\begin{align*}
  P_X(\mX') = \sum_{x\in\mX'}P_X(x).
\end{align*}
First we note a simple property of the probability of a subset of
$\mX$, when the distribution has a nonzero trace distance to the
uniform distribution.

\begin{lemma}\label{lemma:subset}
  If the trace distance between $P_X$ and the uniform distribution is
  $\epsilon$, then for any subset $\mX'\subseteq\mX$,
  \begin{equation}\label{eq:subset}
    \Big|P_X(\mX')-\frac{|\mX'|}{|\mX|}\Big| \le \epsilon.
  \end{equation}
  Also, there are subsets that reach the bound.
\end{lemma}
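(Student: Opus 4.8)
The plan is to recognize this as the standard characterization of trace distance as the maximal deviation over events. Writing $D(x) = P_X(x) - 1/|\mX|$ for the pointwise difference from the uniform distribution, the quantity to be bounded is exactly $P_X(\mX') - |\mX'|/|\mX| = \sum_{x\in\mX'} D(x)$, a partial sum of the $D(x)$. The key observation is that $\sum_{x\in\mX} D(x) = 0$, since both $P_X$ and the uniform distribution sum to one over all of $\mX$.

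First I would split $\mX$ into the set $\mX^+ = \{x : D(x) \ge 0\}$ where $P_X$ meets or exceeds the uniform and its complement $\mX^- = \{x : D(x) < 0\}$. Because the total sum vanishes, the positive mass and the negative mass cancel: $\sum_{x\in\mX^+} D(x) = -\sum_{x\in\mX^-} D(x)$. Combining this with the definition $\tr(P_X,\cdot) = \tfrac12\sum_x |D(x)|$ shows that each of these one-sided sums is pinned to $\epsilon$ exactly, namely $\sum_{x\in\mX^+} D(x) = \epsilon$ and $\sum_{x\in\mX^-} D(x) = -\epsilon$.

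Next, for an arbitrary subset $\mX'$, I would split its contribution along the same dichotomy, $\sum_{x\in\mX'} D(x) = \sum_{x\in\mX'\cap\mX^+} D(x) + \sum_{x\in\mX'\cap\mX^-} D(x)$. The first term is nonnegative and bounded above by $\sum_{x\in\mX^+} D(x) = \epsilon$, while the second is nonpositive and bounded below by $\sum_{x\in\mX^-} D(x) = -\epsilon$. Discarding whichever term works against the desired direction gives $-\epsilon \le \sum_{x\in\mX'} D(x) \le \epsilon$, which is the claimed inequality \eqref{eq:subset}. For the final assertion, taking $\mX' = \mX^+$ makes the negative contribution vanish and yields $P_X(\mX^+) - |\mX^+|/|\mX| = \epsilon$, so the bound is attained; symmetrically $\mX' = \mX^-$ attains $-\epsilon$.

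There is no serious obstacle here, as the argument is elementary. The only point requiring a little care is the bookkeeping of signs when separating the positive and negative parts, and the use of the vanishing total sum to pin the one-sided sums exactly to $\epsilon$ rather than merely bounding them above.
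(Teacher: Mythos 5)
Your proof is correct and is essentially the same as the paper's: both decompose $\mX$ into the sets where $P_X$ exceeds or falls below $1/|\mX|$, observe that the trace distance pins each one-sided deviation to exactly $\epsilon$, bound the partial sums over $\mX'$ by these one-sided totals, and exhibit the extremal subsets to show the bound is attained. The only difference is notational (your pointwise $D(x)$ versus the paper's set-probability formulation), not mathematical.
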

\begin{proof}
  With $\mX_+:=\{x\in\mX:P_X(x)>1/|\mX|\}$ and
  $\mX_-:=\{x\in\mX:P_X(x)<1/|\mX|\}$, it is straightforward to see
  that
  \begin{equation}
    \begin{split}
      \epsilon=\tfrac12\sum_{x\in\mX}\Big|P_X(x)-\frac1{|\mX|}\Big|
      =P_X(\mX_+)-\frac{|\mX_+|}{|\mX|}=\frac{|\mX_-|}{|\mX|}-P_X(\mX_-).
    \end{split}
  \end{equation}
  Now, for any subset $\mX'\subseteq\mX$, we have
  \begin{equation}
    \begin{split}
      P_X(\mX')-\frac{|\mX'|}{|\mX|}
      \le P_X(\mX'\cap\mX_+)-\frac{|\mX'\cap\mX_+|}{|\mX|}
      \le P_X(\mX_+)-\frac{|\mX_+|}{|\mX|}=\epsilon
    \end{split}
  \end{equation}
  and also 
  \begin{equation}
    \begin{aligned}
      \frac{|\mX'|}{|\mX|}-P_X(\mX') 
      \le \frac{|\mX'\cap\mX_-|}{|\mX|}-P_X(\mX'\cap\mX_-)
      \le \frac{|\mX_-|}{|\mX|} - P_X(\mX_-)=\epsilon. 
    \end{aligned}
  \end{equation}
  This proves the inequality, and the subsets $\mX'=\mX_+$ and
  $\mX'=\mX_-$ both reach the bound.\hfill$\square$
\end{proof}

From this lemma follows a bound for the conditional probability of an
even smaller subset of $\mX$, when the distribution has a nonzero
trace distance to the uniform distribution.  We will use this later
when discussing security with preexisting partial knowledge and
additional gained knowledge in the message exchange.
\begin{theorem}\label{lemma:sub-subset}
  If the trace distance between $P_X$ and the uniform distribution is
  $\epsilon$, then for any subsets $\mX''\subseteq\mX'\subseteq\mX$,
\begin{equation}\label{eq:sub-subset}
  \Big|P_X(\mX''\,|\,\mX') - \frac{|\mX''|}{|\mX'|}\Big| 
  \le \frac{|\mX|}{|\mX'|}\epsilon.
\end{equation}
Also, there are subsets which reach the bound.
\end{theorem}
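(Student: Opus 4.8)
The plan is to reduce the conditional statement to the already-proven Lemma~\ref{lemma:subset} by clearing the conditioning. Writing out the conditional probability, we have
\begin{equation*}
  P_X(\mX''\,|\,\mX') = \frac{P_X(\mX'')}{P_X(\mX')},
\end{equation*}
since $\mX''\subseteq\mX'$, so that $P_X(\mX''\cap\mX')=P_X(\mX'')$. The target quantity is then
\begin{equation*}
  \Big|\frac{P_X(\mX'')}{P_X(\mX')} - \frac{|\mX''|}{|\mX'|}\Big|.
\end{equation*}
First I would place both fractions over the common denominator $P_X(\mX')$, obtaining
\begin{equation*}
  \Big|\frac{P_X(\mX'') - \tfrac{|\mX''|}{|\mX'|}P_X(\mX')}{P_X(\mX')}\Big|.
\end{equation*}

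Next, the idea is to introduce the uniform benchmark $|\mX''|/|\mX|$ for the numerator. I would add and subtract the term $|\mX''|/|\mX|$ inside the numerator, rewriting it as
\begin{equation*}
  \Big(P_X(\mX'') - \frac{|\mX''|}{|\mX|}\Big)
  - \frac{|\mX''|}{|\mX'|}\Big(P_X(\mX') - \frac{|\mX'|}{|\mX|}\Big),
\end{equation*}
where one checks that the two introduced $|\mX''|/|\mX|$ terms cancel: the second group contributes $-\tfrac{|\mX''|}{|\mX'|}\cdot\tfrac{|\mX'|}{|\mX|} = -\tfrac{|\mX''|}{|\mX|}$, which is cancelled by the $+|\mX''|/|\mX|$ from the first group. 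Now both parenthesized expressions are exactly the deviations from uniform controlled by Lemma~\ref{lemma:subset}, so each is bounded in absolute value by $\epsilon$.

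Applying the triangle inequality and the factor $|\mX''|/|\mX'|\le 1$ gives a numerator bound of $\epsilon(1 + |\mX''|/|\mX'|)\le 2\epsilon$, which divided by $P_X(\mX')$ is too crude to reach the claimed bound $\tfrac{|\mX|}{|\mX'|}\epsilon$. So the main obstacle is tightness: the naive triangle inequality wastes a factor. The fix I expect to need is to keep the signs aligned rather than bounding each term independently. Since both deviations $P_X(\mX'')-|\mX''|/|\mX|$ and $P_X(\mX')-|\mX'|/|\mX|$ are constrained simultaneously by the \emph{same} sets $\mX_+,\mX_-$ from Lemma~\ref{lemma:subset}, the worst case occurs when $P_X(\mX')$ is as small as possible, namely $P_X(\mX')\ge |\mX'|/|\mX|-\epsilon$, while the numerator is extremal. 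The cleanest route is probably to bound the numerator directly against $P_X(\mX')$: rewrite the difference as $\tfrac{1}{P_X(\mX')}\big(P_X(\mX'')-\tfrac{|\mX''|}{|\mX'|}P_X(\mX')\big)$ and observe that $\mX''\subseteq\mX'$ lets me apply Lemma~\ref{lemma:subset} to $\mX''$ relative to $\mX'$ so that the numerator is at most $\epsilon$ in absolute value, and then the denominator is at least $|\mX'|/|\mX|$ only in the uniform case; accounting for its smallest value $\ge 1-\text{(slack)}$ yields the factor $|\mX|/|\mX'|$. I would finish by verifying the bound is reached by taking $\mX'=\mX''=\mX_-$ (or the appropriate extremal subsets), confirming the $|\mX|/|\mX'|$ amplification is genuinely attained and not an artifact of the estimate.
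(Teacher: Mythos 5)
Your setup and your diagnosis of the difficulty are both correct: the deviations of $P_X(\mX'')$ and $P_X(\mX')$ from their uniform values are correlated, and any argument that bounds the numerator and the denominator of $P_X(\mX'')/P_X(\mX')$ \emph{independently} cannot produce the constant $\tfrac{|\mX|}{|\mX'|}$. But your proposed fix does not actually escape this trap. Two concrete problems. First, the claim that the numerator $P_X(\mX'')-\tfrac{|\mX''|}{|\mX'|}P_X(\mX')$ is at most $\epsilon$ in absolute value is true, but it is not a legitimate ``application of Lemma~\ref{lemma:subset} to $\mX''$ relative to $\mX'$'': Lemma~\ref{lemma:subset} speaks only of subsets of $\mX$ under $P_X$, and what you need is the identity $P_X(\mX'')-\tfrac{|\mX''|}{|\mX'|}P_X(\mX') = (1-\lambda)\bigl[P_X(\mX'')-\tfrac{|\mX''|}{|\mX|}\bigr]-\lambda\bigl[P_X(\mX'\setminus\mX'')-\tfrac{|\mX'\setminus\mX''|}{|\mX|}\bigr]$ with $\lambda=|\mX''|/|\mX'|$, a \emph{convex combination} of two deviations each bounded by $\epsilon$. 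Second, and fatally, dividing that $\epsilon$ by the smallest possible denominator $P_X(\mX')\ge |\mX'|/|\mX|-\epsilon$ gives $|\mX|\epsilon/(|\mX'|-|\mX|\epsilon)$, which is \emph{strictly larger} than the claimed $\tfrac{|\mX|}{|\mX'|}\epsilon$ whenever $\epsilon>0$; the two extremes are never attained simultaneously (when $P_X(\mX')$ sits at its minimum, both deviations are nonpositive and the numerator is forced down to at most $\lambda\epsilon$). So the step ``accounting for its smallest value yields the factor $|\mX|/|\mX'|$'' is precisely the gap, and your proposal never closes it. Your tightness check is also wrong: with $\mX''=\mX'=\mX_-$ the left-hand side of \eqref{eq:sub-subset} is $0$, since both the conditional probability and the size ratio equal $1$.

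The paper closes the gap with an algebraic arrangement in which the correlation cancels automatically: write
\begin{equation*}
  P_X(\mX''\,|\,\mX')=\Bigl(1+\frac{P_X(\mX'\setminus\mX'')}{P_X(\mX'')}\Bigr)^{-1},
\end{equation*}
and apply Lemma~\ref{lemma:subset} in opposite directions to the two \emph{disjoint} sets, $P_X(\mX'')\le a+\epsilon$ and $P_X(\mX'\setminus\mX'')\ge b-\epsilon$, where $a=|\mX''|/|\mX|$ and $b=|\mX'\setminus\mX''|/|\mX|$. Because $P_X(\mX'')$ appears in both the numerator and the denominator with the same sign, recombining gives $\tfrac{a+\epsilon}{(a+\epsilon)+(b-\epsilon)}=\tfrac{a+\epsilon}{a+b}=\tfrac{|\mX''|}{|\mX'|}+\tfrac{|\mX|}{|\mX'|}\epsilon$: the $\pm\epsilon$ cancel exactly in the denominator, and the sharp constant appears; the lower bound is symmetric. (Your sign-alignment idea can in principle be pushed through by cross-multiplying and doing a case analysis, including the degenerate case $\epsilon>|\mX'\setminus\mX''|/|\mX|$ where the claimed bound exceeds $1$ and holds trivially, but you do not carry this out.) The paper's extremal example is $\mX_+\cup\mX_-\subseteq\mX'$ with $\mX''=\mX_+$, not $\mX''=\mX'$.
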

\begin{proof}
  The conditional probability can be written
\begin{equation}\label{eq:sss2}
\begin{aligned}
  P_X(\mX''\,|\,\mX') &= \frac{P_X(\mX'')}{P_X(\mX')} 
  = \frac{P_X(\mX'')}{P_X(\mX'') + P_X(\mX'\setminus\mX'')}
  = \left(1+\frac{P_X(\mX'\setminus\mX'')}{P_X(\mX'')}\right)^{-1}.
\end{aligned}
\end{equation}
To bound this from above, we need an upper bound for $P_X(\mX'')$ 
and a lower bound for $P_X(\mX'\setminus\mX'')$, both of which can 
be obtained using Lemma \ref{lemma:subset},
\begin{equation}\label{eq:sss3}
  P_X(\mX'')\le \frac{|\mX''|}{|\mX|} + \epsilon;\quad
  P_X(\mX'\setminus\mX'') \ge \frac{|\mX'\setminus\mX''|}{|\mX|}
  - \epsilon.
\end{equation}
These give us the upper bound
\begin{equation}\label{eq:sss5}
\begin{aligned}
  P_X(\mX''\,&|\,\mX')
  = \left(1+\frac{P_X(\mX'\setminus\mX'')}{P_X(\mX'')}\right)^{-1} 
  \le\left(1+\frac{\frac{|\mX'\setminus\mX''|}{|\mX|} -
      \epsilon}{\frac{|\mX''|}{|\mX|} + \epsilon} \right)^{-1} =
  \frac{|\mX''|}{|\mX'|} + \frac{|\mX|}{|\mX'|}\epsilon.
\end{aligned}
\end{equation}
Similarly, from Lemma \ref{lemma:subset} we also know that 
\begin{equation}\label{eq:sss6} 
  P_X(\mX'')\ge\frac{|\mX''|}{|\mX|} - \epsilon;\quad
  P_X(\mX'\setminus\mX'') \le \frac{|\mX'\setminus\mX''|}{|\mX|} 
  + \epsilon.
\end{equation}
These give us the lower bound
\begin{equation}\label{eq:sss8}
\begin{aligned}
  P_X(\mX''\,&|\,\mX') = \left(1+\frac{P_X(\mX'\setminus\mX'')}
    {P_X(\mX'')}\right)^{-1} 
  \ge\left(1+\frac{\frac{|\mX'\setminus\mX''|}{|\mX|} + \epsilon}
    {\frac{|\mX''|}{|\mX|} - \epsilon} \right)^{-1} =
  \frac{|\mX''|}{|\mX'|} - \frac{|\mX|}{|\mX'|}\epsilon.
\end{aligned}
\end{equation}
This proves the inequality. The bound can be reached in several ways,
for example when $(\mX_+\cup\mX_-)\subseteq\mX'$ and
$\mX''=\mX_+$.\hfill$\square$
\end{proof}

Using the above theorem, we can derive a bound for the trace 
distance of the conditional distribution of $x$ on a subset
$\mX'\subseteq\mX$. This will be useful when discussing 
trace distance in relation to security later.
\begin{theorem}\label{lemma:cor}
  If the trace distance between $P_X$ and the uniform distribution is
  $\epsilon$, then given a subset $\mX'\subseteq\mX$, the conditional
  distribution of $x$ on $\mX'$ has trace distance to the uniform (on
  $\mX'$) that is bounded by
  \begin{equation}\label{eq:cor}
    \tfrac12\sum_{x\in\mX'}\Big|P_X(x\,|\,\mX')-\frac1{|\mX'|}\Big| 
    \le \frac{|\mX|}{|\mX'|}\epsilon.
  \end{equation}
  For certain subsets $\mX'$, the bound is reached.
\end{theorem}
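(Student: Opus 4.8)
The plan is to reduce the statement to the sub-subset bound of Theorem~\ref{lemma:sub-subset} via the standard characterization of trace distance as a maximal probability difference. First I would recall that the conditional distribution $P_X(\cdot\,|\,\mX')$ and the uniform distribution on $\mX'$ are both normalized over $\mX'$. Hence, writing $\mX'_+ := \{x\in\mX' : P_X(x\,|\,\mX') > 1/|\mX'|\}$ for the set on which the conditional distribution exceeds the uniform, the positive and negative contributions to the sum in \eqref{eq:cor} cancel, and the trace distance collapses to the one-sided deviation
$$\tfrac12\sum_{x\in\mX'}\Big|P_X(x\,|\,\mX')-\frac1{|\mX'|}\Big| = P_X(\mX'_+\,|\,\mX') - \frac{|\mX'_+|}{|\mX'|}.$$
This is exactly the conditional analogue of the identity that opened the proof of Lemma~\ref{lemma:subset}.

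The second step is then immediate. Since $\mX'_+\subseteq\mX'\subseteq\mX$, I would apply Theorem~\ref{lemma:sub-subset} with inner subset $\mX''=\mX'_+$, which gives
$$\Big|P_X(\mX'_+\,|\,\mX') - \frac{|\mX'_+|}{|\mX'|}\Big| \le \frac{|\mX|}{|\mX'|}\epsilon.$$
Combining the two displays yields \eqref{eq:cor}. There is essentially no obstacle in this direction; the only point deserving care is the normalization identity of the first step, which relies on both distributions summing to one over $\mX'$ (rather than over $\mX$), so that the deviations balance and the absolute-value sum reduces to twice the positive part.

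For the final claim that the bound is attained, I would reuse the extremal configuration already identified for Theorem~\ref{lemma:sub-subset}, namely any $\mX'$ with $(\mX_+\cup\mX_-)\subseteq\mX'$ together with the two-level extremal distribution of Lemma~\ref{lemma:subset}. The key observation to verify is that, for such a distribution, the excess of $P_X$ over $\mX_+$ exactly cancels its deficit over $\mX_-$, so that $P_X(\mX')=|\mX'|/|\mX|$; consequently the conditional threshold $1/|\mX'|$ corresponds precisely to the original threshold $1/|\mX|$, forcing $\mX'_+=\mX_+$. Feeding $\mX''=\mX_+$ into the tight case of Theorem~\ref{lemma:sub-subset} then makes the one-sided deviation equal to $\tfrac{|\mX|}{|\mX'|}\epsilon$, so that \eqref{eq:cor} holds with equality.
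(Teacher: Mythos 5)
Your proposal is correct and takes essentially the same route as the paper: write the trace distance as the one-sided deviation of a positive set inside $\mX'$ and bound that deviation by Theorem~\ref{lemma:sub-subset}, with tightness coming from subsets $\mX'\supseteq\mX_+\cup\mX_-$. The only difference is minor (and in your favour): you work with the conditional positive set $\mX'_+=\{x\in\mX':P_X(x\,|\,\mX')>1/|\mX'|\}$, whereas the paper writes the same deviation using $\mX_+\cap\mX'$, which coincides with $\mX'_+$ only when $P_X(\mX')=|\mX'|/|\mX|$ (as it does in the extremal configuration), so your version is the slightly more careful one.
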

\begin{proof}
It is straightforward to see that 
\begin{equation}\label{eq:cor1}
\begin{aligned}
\tfrac12\sum_{x\in\mX'}&\Big|P_X(x\, |\, \mX') - \frac1{|\mX'|}\Big| 
= P_X(\mX_+\cap \mX' \,|\, \mX') - \frac{|\mX_+\cap\mX'|}{|\mX'|}
\le \frac{|\mX|}{|\mX'|}\epsilon,
\end{aligned}
\end{equation}
where the inequality follows from Theorem~\ref{lemma:sub-subset}. 
The bound is reached when $\mX_+\cup\mX_-\subseteq\mX'$.\hfill$\square$
\end{proof}


\section{Information-theoretic security with partially known key}
\label{sec:ITS-security}

In this section we analyse security of the authentication 
scheme under study in information-theoretic setting, in the 
scenario where the key has a small but non-zero trace distance 
to the uniform. The WCA scheme uses $\epsilon$-ASU$_2$ hashing, 
and is $\epsilon$-secure, meaning that the probability of success 
in a substitution attack is bounded above by $\epsilon$, if the
authentication key is uniformly distributed (perfect).  We will now
analyse what happens when this is not the case, when the trace
distance to the uniform is nonzero. This means that the authentication
key is a random variable $K$ to Eve, and we use $\epsilon'$ to denote
its trace distance to the uniform.

We will start by giving an example of how large Eve's probability for
a successful substitution attack can become, even when using a SU$_2$
family. Since we are talking about a substitution attack, we need to
calculate the probability conditioned on Eve having seen a message-tag
pair $(m,t)$ from Alice.  One possible distribution is
\begin{equation}\label{eq:key-distr}
  P_K(k)=
  \begin{cases}
    \frac1{|\mK|}+\epsilon',&\text{if }k\in\mK_+=\{k_+\}\\
    \frac1{|\mK|}-\epsilon'\frac1{|\mK_-|},
    &\text{if } k\in\mK_-\\
    \frac1{|\mK|},&\text{otherwise}.
  \end{cases}
\end{equation}
This has trace distance $\epsilon'$ to the uniform. If
$\epsilon'>1/|\mK|$, the set $\mK_-$ must contain more than one
value. (Compare with the distribution used in \cite{CL} where
$P_K(k)=0$ if $k\in\mK_-$; $P_K(k)=1/(|\mK|-|\mK_-|)$ if 
$k\in\mK_+=\mK\setminus\mK_-$; and $\epsilon'=|\mK_-|/|\mK|$.) 
It is easy to see that Eve's probability for success, without more 
information on $K$, is maximal if she chooses $t_\E=f_{k_+}(m_\E)$ 
and $m_\E$ is such that $t_\E\neq{}f_{k_-}(m_\E)$ for all $k_-\in\mK_-$. 
Since the hash function family is SU$_2$, $|\{k:f_k(m_\E)=t_\E\}|=|\mK|/|\mT|$, 
and this set contains $k_+$ but excludes $\mK_-$ so that
\begin{equation}
  \begin{split}
    \Pr\big\{f_{K}(m_\E)=t_\E\big\}
    =\frac1{|\mK|}+\epsilon'+\Big(\frac{|\mK|}{|\mT|}-1\Big)\frac1{|\mK|}
    =\frac{|\mK|}{|\mT|}\frac1{|\mK|}+\epsilon'
    =\frac1{|\mT|}+\epsilon'.
  \end{split}\label{eq:2}
\end{equation}
It is also easy to see that Eve's probability for success increases if
she sees a valid message-tag pair $(m,t=f_K(m))$. Eve's gain will now
depend on $m$, and her gain is maximal if both $f_{k_+}(m)=t$ and
$f_{k_-}(m)=t$ for all $k_-\in\mK_-$, so that
\begin{equation}
  \Pr\big\{f_{K}(m) =t\big\}
  =\frac{|\mK|}{|\mT|}\frac1{|\mK|}+\epsilon'
  -|\mK_-|\epsilon'\frac1{|\mK_-|}
  =\frac1{|\mT|}.
\end{equation}
If $\epsilon'$ is small, there will exist such messages $m$. Since
the hash function family is SU$_2$,
$|\{k:f_k(m_\E)=t_\E\wedge{}f_k(m)=t\}|=|\mK|/|\mT|^2$, and again this
set contains $k_+$ but excludes $\mK_-$. Therefore
\begin{equation}
  \begin{split}
    \Pr\big\{f_{K}(m_\E)=t_\E\,\big|\, f_{K}(m)=t\big\}
      &=\frac{\Pr\big\{f_{K}(m_\E)=t_\E\wedge f_{K}(m)=t\big\}}
      {\Pr\big\{f_{K}(m)=t\big\}}\\
      &=\frac{\frac{|\mK|}{|\mT|^2}\frac1{|\mK|}+\epsilon'}{\frac1{|\mT|}}
      =\frac{\frac1{|\mT|^2}+\epsilon'}{\frac1{|\mT|}}
      =\frac1{|\mT|}+|\mT|\epsilon'.
  \end{split}
  \label{eq:1}
\end{equation}
Note that this is an equation, not an inequality. Before seeing
$(m,t)$ Eve's probability of a successful message insertion attack
equals $1/|\mT|+\epsilon'$. After seeing $(m,t)$, Eve's probability
of a successful substitution attack \emph{equals}
$1/|\mT|+|\mT|\epsilon'$.

This might be taken as cause for alarm, but one should note that this
is message-dependent: not all message-tag pairs $(m,t)$ will cause such an
increase. 
It was pointed out already in \cite{CL} that the message and used key
value may be such that Eve may have this unexpectedly high probability
of success. On the other hand, in some situations (here, when
$f_{k_+}(m)\neq t$), Eve will instead find out that her most likely
key value was, in fact, not used, and that she must remove it from the
set of possible key values. In this case, the information she had
becomes unusable; she will have lost information. But, importantly,
Eve can find out if there was a gain or not, before performing an
active (guessing) attack, by using her distribution of $K$ and the
received message-tag pair from Alice. Eve then only performs an active
attack if her success probability has increased (sufficiently, see
\cite{CL}). From Alice's point of view, the probability of having her
message-tag pair \emph{and} a successful attack from Eve is
$1/|\mT|+\epsilon'$, but this probability is \emph{per round}, not
per guess (by Eve). Eve does not need to reveal herself by guessing
frequently; she can wait for the beneficial case where her success
probability is high \cite{CL}. 

Therefore, there is a clear need for an upper bound for the success
probability in this situation. For general $\epsilon$-ASU$_2$-based
authentication, the following theorem holds.

\begin{theorem} \label{thm-WCA} \emph{(Bound for guessing probability
    with partially known key.)} Consider the WCA scheme based on
  $\epsilon$-ASU$_2$ hashing.  If the authentication key is
  $\epsilon'$-perfect (as random variable $K$ to the adversary), the
  probability of a successful message insertion is bounded by
  \begin{equation}
    \Pr\big\{f_K(m_\E)=t_\E \big\}\le\frac1{|\mT|}+\epsilon'.
  \end{equation}
  If in addition the adversary has access to a valid message-tag pair
  $(m,t)$, the probability of a successful substitution is bounded by
  \begin{equation}\label{eq:thm-WCA}
    \Pr\big\{f_K(m_\E)=t_\E \,\big|\, f_K(m)=t\big\} 
    \le \epsilon+|\mT|\epsilon'.
  \end{equation}
\end{theorem}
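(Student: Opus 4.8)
The plan is to translate both probabilities into subset probabilities over the key space $\mK$ and then invoke the results of Section~\ref{sec:trace-distance}, with $\epsilon'$ playing the role of the trace distance $\epsilon$ there. Throughout I identify each key $k$ with the hash function $f_k$, so that counting hash functions with a prescribed value equals counting keys; in particular $|\mK|=|\mH|$.

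For the impersonation bound I would set $\mK':=\{k\in\mK:f_k(m_\E)=t_\E\}$, so that $\Pr\{f_K(m_\E)=t_\E\}=P_K(\mK')$. Property (a) of the $\epsilon$-ASU$_2$ definition gives the exact count $|\mK'|=|\mK|/|\mT|$, whence $|\mK'|/|\mK|=1/|\mT|$. Applying Lemma~\ref{lemma:subset} with $\mX=\mK$ and trace distance $\epsilon'$ then yields $P_K(\mK')\le |\mK'|/|\mK|+\epsilon'=1/|\mT|+\epsilon'$, which is the first claim.

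For the substitution bound the natural sets are $\mK':=\{k:f_k(m)=t\}$, the keys consistent with the observed pair, and $\mK'':=\{k:f_k(m)=t \text{ and } f_k(m_\E)=t_\E\}\subseteq\mK'\subseteq\mK$. Then $\Pr\{f_K(m_\E)=t_\E \mid f_K(m)=t\}=P_K(\mK''\mid\mK')$, and I would bound this directly with Theorem~\ref{lemma:sub-subset}:
\begin{equation*}
  P_K(\mK''\mid\mK')\le \frac{|\mK''|}{|\mK'|}+\frac{|\mK|}{|\mK'|}\epsilon'.
\end{equation*}
Property (a) again fixes $|\mK'|=|\mK|/|\mT|$, so the second term is exactly $|\mT|\epsilon'$. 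For the first term, since the substitution attack requires $m_\E\neq m$, property (b) applies with $m_1=m$, $t_1=t$, $m_2=m_\E$, $t_2=t_\E$, bounding the fraction $|\mK''|/|\mK'|$ by $\epsilon$. Combining gives the claimed bound $\epsilon+|\mT|\epsilon'$.

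The main point to get right — more a matter of care than a genuine obstacle — is the interplay between the two ASU$_2$ properties and the conditioning. Property (a) must be used for its \emph{exact} count $|\mK'|=|\mK|/|\mT|$, because it is precisely this equality that converts the conditioning penalty $|\mK|/|\mK'|$ of Theorem~\ref{lemma:sub-subset} into the factor $|\mT|$ multiplying $\epsilon'$; any slack here would propagate into the bound and is the source of the somewhat alarming $|\mT|\epsilon'$ term. One must also invoke property (b) only in the genuine substitution case $m_\E\neq m$, and note that it controls the conditional fraction $|\mK''|/|\mK'|$ directly, rather than the unconditional $|\mK''|/|\mK|$. With these identifications the theorem follows immediately from Lemma~\ref{lemma:subset} and Theorem~\ref{lemma:sub-subset}.
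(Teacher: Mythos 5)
Your proposal is correct and follows essentially the same route as the paper's own proof: Lemma~\ref{lemma:subset} applied to $\{k:f_k(m_\E)=t_\E\}$ for the impersonation bound, and Theorem~\ref{lemma:sub-subset} applied to the same nested sets $\mK''\subseteq\mK'$ for the substitution bound. If anything, your write-up is slightly more careful than the paper's, since you invoke property (b) as an upper bound $|\mK''|/|\mK'|\le\epsilon$ (and note the requirement $m_\E\neq m$) where the paper loosely writes $|\mK''|=\epsilon|\mK|/|\mT|$.
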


\begin{proof}
  The first inequality is obtained by applying
  Lemma~\ref{lemma:subset} to the set
  $\{k\in\mK:f_k(m_\E)=t_\E\}$. Since the hash function family is
  $\epsilon$-ASU$_2$ (Def. 3(a)), this set has the size $|\mK|/|\mT|$,
  and
  \begin{equation}\label{WCA-1}
    \Pr\big\{f_K(m_\E)=t_\E\big\}
    \le\frac{|\mK|}{|\mT|}\frac1{|\mK|}+\epsilon'
    =\frac1{|\mT|}+\epsilon'.
  \end{equation}
  To bound the probability that the adversary sees $(m,t)$ \emph{and}
  performs a successful substitution attack, we denote the subset of
  authentication key values that gives $(m,t)$ by
  \begin{equation}\label{eq:WCA-2}
  \mK' = \{k\in\mK: f_k(m)=t\}, 
  \end{equation}
  and where the attack succeeds by
  \begin{equation}
    \mK''=\{k\in\mK: f_k(m_\E) = t_\E \wedge f_k(m) = t \}.
    \label{eq:WCA-3}
  \end{equation}
  We know from Def. 3 that $|\mK'|=|\mK|/|\mT|$ and that
  $|\mK''|=\epsilon|\mK|/|\mT|$.  So using Theorem
  \ref{lemma:sub-subset}, we have
  \begin{equation}\label{eq:WCA-4}
    \begin{split}
      \Pr\big\{f_K(m_\E)&=t_\E \,\big|\, f_K(m)=t\big\} =P_K(\mK''\,|\,\mK')
      \le \frac{|\mK''|}{|\mK'|} + \frac{|\mK|}{|\mK'|}\epsilon' \le \epsilon+|\mT|\epsilon'.
    \end{split}
  \end{equation}
\hfill$\square$
\end{proof}

This theorem tells us that the previous example really is a worst-case
scenario, so that the upper bound for Eve's success probability after
seeing a message-tag pair is $\epsilon+|\mT|\epsilon'$. Conversely,
the example shows that the bound is sharp: there are situations where
the bound is reached, so the bound cannot be lowered if one wants
information-theoretic security.

In the Universal Composability framework (to be discussed below), the
relevant figure of merit is the trace distance to the uniform
distribution, and not the guessing probability as given above. And
also the trace distance increases by the same amount, in the
beneficial case for Eve. The key is still random to Eve, but the
distribution conditioned on her new knowledge that $h_K(m)=t$ has a
larger trace distance to the uniform.  A uniform distribution
conditioned on $h_K(m)=t$ would be constant at $|\mT|/|\mK|$ (the set
of still possible keys has the size $|\mK|/|\mT|$), but in our
example, if both $f_{k_+}(m)=t$ and $f_{k_-}(m)=t$ for all
$k_-\in\mK_-$,
\begin{equation}
  \label{eq:3}
  \begin{split}
    P_K(k_+\,|\,h_K(m)=t)&=\frac{\Pr\{K=k_+\wedge
      h_K(m)=t\}}{P\{h_K(m)=t\}}
    =\frac{P_K(k_+)}{\Pr\{h_K(m)=t\}}\\
   &=\frac{\frac1{|\mK|}+\epsilon'}{\frac1{|\mT|}}
    =\frac{|\mT|}{|\mK|}+|\mT|\epsilon'.
  \end{split}
\end{equation}
This forces the conditional distribution of the key to have a high
trace distance to the uniform. As before, the example gives the
worst-case scenario, and an upper bound for this trace distance is
given by the following theorem.
\begin{theorem}\label{thm-WCA-cor}
  \emph{(Bound for trace distance with partially known key.)}
  Consider the WCA scheme based on $\epsilon$-ASU$_2$ hashing.  
  If the authentication key is $\epsilon'$-perfect (as random variable 
  $K$  to the adversary), and the adversary has access to a valid
  message-tag pair $(m,t)$, then the trace distance from the
  conditional probability to the uniform is bounded by
   \begin{equation}\label{eq:WCA-cor}
     \tfrac12\sum_{k:f_k(m)=t}\bigg|P_K(k\,|\,f_K(m)=t)
     -\frac1{|\{k:f_k(m)=t\}|}\bigg| 
     \le  |\mT|\epsilon'.
  \end{equation}
\end{theorem}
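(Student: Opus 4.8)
The theorem asks for a bound on the trace distance of the conditional key distribution, conditioned on $f_K(m)=t$. The natural plan is to recognize this as a direct application of Theorem~\ref{lemma:cor}, with the right identification of sets. I would set $\mX=\mK$, so that $|\mX|=|\mK|$, and take $\mX'=\mK'=\{k\in\mK:f_k(m)=t\}$, the set of keys consistent with the observed message-tag pair. The quantity on the left of \eqref{eq:WCA-cor} is then exactly the left side of \eqref{eq:cor}, namely the trace distance of the conditional distribution on $\mX'$ to the uniform on $\mX'$.

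**The main computation.** Having made that identification, Theorem~\ref{lemma:cor} immediately gives the bound $\frac{|\mK|}{|\mK'|}\epsilon'$, since the key is $\epsilon'$-perfect. The only remaining step is to evaluate $|\mK|/|\mK'|$. Here I would invoke Definition~3(a): because the hash function family is (A)SU$_2$, the set $\mK'=\{k:f_k(m)=t\}$ has size $|\mK|/|\mT|$. Substituting gives
\begin{equation*}
  \frac{|\mK|}{|\mK'|}\epsilon' = \frac{|\mK|}{|\mK|/|\mT|}\epsilon' = |\mT|\epsilon',
\end{equation*}
which is precisely the claimed bound. So the proof is essentially a two-line reduction: identify $\mX'$ with $\mK'$, apply Theorem~\ref{lemma:cor}, and use the SU$_2$ property to fix the ratio $|\mK|/|\mK'|=|\mT|$.

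**Where the real work sits.** I expect no genuine obstacle in this proof, since all the heavy lifting was already done in Theorem~\ref{lemma:cor} (which in turn rested on Theorem~\ref{lemma:sub-subset} and Lemma~\ref{lemma:subset}). The one point requiring a moment's care is the sharpness claim implicit in the surrounding discussion: the displayed worked example with $P_K$ as in \eqref{eq:key-distr} should saturate the bound. To confirm this I would check that the example places the extremal sets in the right configuration, i.e.\ that $\mK_+\cup\mK_-\subseteq\mK'$ (all keys carrying excess or deficit probability remain consistent with $(m,t)$), which is exactly the saturation condition stated at the end of the proof of Theorem~\ref{lemma:cor}. The explicit computation \eqref{eq:3} giving $P_K(k_+\mid h_K(m)=t)=|\mT|/|\mK|+|\mT|\epsilon'$ already exhibits a conditional probability exceeding the uniform value $|\mT|/|\mK|$ by exactly $|\mT|\epsilon'$, consistent with the bound being attained. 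Thus the statement follows directly, and the example demonstrates that $|\mT|\epsilon'$ cannot be improved.
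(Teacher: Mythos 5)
Your proposal is correct and follows essentially the same route as the paper's own proof: identify $\mK'=\{k\in\mK:f_k(m)=t\}$, apply Theorem~\ref{lemma:cor} with $\mX=\mK$, $\mX'=\mK'$, and use the ASU$_2$ property (Def.~3(a)) to get $|\mK|/|\mK'|=|\mT|$, yielding the bound $|\mT|\epsilon'$. Your added check that the worked example \eqref{eq:key-distr} saturates the bound (via $\mK_+\cup\mK_-\subseteq\mK'$ and the computation \eqref{eq:3}) matches the sharpness discussion the paper gives immediately after the theorem.
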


\begin{proof}
  We use $\mK' = \{k\in\mK: f_k(m)=t\}$ and immediately obtain the
  bound from Theorem~\ref{lemma:cor}:
   \begin{equation}\label{eq:WCA-cor-1}
     \tfrac1{2}\sum_{k\in\mK'}\Big|P_K(k\,|\,\mK')
     -\frac1{|\mK'|}\Big| 
     \le \frac{|\mK|}{|\mK'|}\epsilon'
     =|\mK|\frac{|\mT|}{|\mK|}\epsilon'=|\mT|\epsilon'.
  \end{equation}
\hfill$\square$
\end{proof}

Again, the bound is sharp because of the example: there are situations
where the bound is reached, so the bound cannot be lowered if one
wants information-theoretic security. Note that, again, that this
depends on $(m,t)$, and a similar argument as that used above applies to
Eve's success rate. The upper bound is only reached in beneficial
situations (for Eve). 

The example shows that the bounds cannot be lowered, but are only
reached for certain $(m,t)$. This means that
the notion of ITS used here is ill suited for the situation. It works
well for perfect keys, because there, the probability of a successful
attack is equally bounded, with a low bound. It is clear that the situation is the
same whether one looks at guessing probability or trace distance;
there is a substantial, but non-constant increase. This is the reason
to turn to the notion of indistinguishability, which is better suited
for this situation.

\section{Indistinguishability from Ideal Authentication}
\label{sec:UC-security}
The notion of witness indistinguishability was first introduced in
\cite{FS}.  Here, we use the indistinguishability notion to prove
that, despite the substantially high bound for ITS, the WCA scheme
with an $\epsilon'$-perfect key is indistinguishable from the ideal
authentication, except with probability $\epsilon+\epsilon'$.  As a
natural consequence, Universally Composable (UC) security of the WCA
scheme with an $\epsilon'$-perfect key directly follows from our proof
of indistinguishability.

\begin{figure}[t]
\begin{center}
\beginpgfgraphicnamed{ideal}
\begin{tikzpicture}[scale=.9,thick,every node/.style={minimum height=.6cm}]
  \scriptsize
  \draw (-2.6,0) node[fill=black!10,minimum width=2cm,minimum height=.8cm]{};%
  \draw (2.6,0) node[fill=black!10,minimum width=2cm,minimum height=.8cm]{};%
  \draw (-3,0) node[draw,minimum width=1cm,fill=white](a) {Alice};%
  \draw (3,0)node[draw,minimum width=1cm,fill=white](b) {Bob};%
  \draw (0, 0)node[draw,minimum width=3cm,fill=white](f) {$\mF$};%

  \tiny
  \draw[->] (a) to node[above]{$m$} (f);
  \draw ($(f.south)+(-1,0)$) to[bend right=45] node[below left]{$m$}
  ++(.5,-.5);%
  \draw[dashed] ($(f.south)+(-1,0)$)++(.5,-.5) to ++(1,0);%
  \draw[->] ($(f.south)+(-1,0)$)++(.5,-.5)++(1,0) to[bend right=45]
  node[below right]{$m'$} ($(f.south)+(1,0)$);%
  \draw[->] (f) to node[above]{$m,\bot$} (b);%
\end{tikzpicture}
\qquad\quad
\begin{tikzpicture}[scale=.9,thick,every node/.style={minimum height=.6cm}]
  \scriptsize%
  \draw (-2,0) node[fill=black!10,minimum width=3cm,minimum
  height=.8cm]{};%
  \draw (2,0) node[fill=black!10,minimum width=3.1cm,minimum
  height=.8cm]{};%
  \draw (-3,0) node[draw,minimum width=1cm,fill=white](a) {Alice};%
  \draw (3,0)node[draw,minimum width=1cm,fill=white](b) {Bob};%
  \draw (0, 1)node[draw,rounded rectangle, minimum width=1.3cm](k)
  {Key};%
  \draw (-1,0)node[draw,rectangle,minimum width=1cm,fill=white](T)
  {TAG};%
  \draw (1,0)node[draw,rectangle,minimum width=1cm,fill=white](V)
  {VRFY};%
  \draw (0,.5) node[draw,dotted,minimum width=3cm,minimum
  height=1.7cm]{} +(1.2,.75)node{WCA}; %

  \tiny%
  \draw[->] (a) to node[above]{$m$} (T);
  \draw (T.south) to[bend right=45] node[below left]{$(m,t)$}
  ++(.5,-.5);%
  \draw[dashed] (T.south)++(.5,-.5) to ++(1,0);%
  \draw[->] (T.south)++(.5,-.5)++(1,0) to[bend right=45] node[below
  right]{$(m',t')$} (V.south);%
  \draw[->] (V) to node[above]{$m',\bot$} (b);%

  \draw[->] (k.south) to node[above right,at end]{$k$} (T.north);
  \draw[->] (k.south) to node[above left,at end]{$k$} (V.north);

  \end{tikzpicture}
  \endpgfgraphicnamed
  \caption{On the left is the ideal functionality: Alice gives her
    message $m$ to the ideal functionality $\mF$, which delivers it to
    Bob if it has not been modified on the channel ($m'=m$), otherwise
    the symbol $\bot$ is delivered. On the right is the real
    implementation in WCA: Alice uses the tag generation algorithm TAG
    to generate a tag $t$ and sends $(m,t)$. At the receiving end, Bob
    uses the verification algorithm VRFY to check if the received
    $(m',t')$ is a valid pair. If not, the symbol $\bot$ is
    delivered. }
  \label{fig:functionality}
\end{center}
\end{figure}
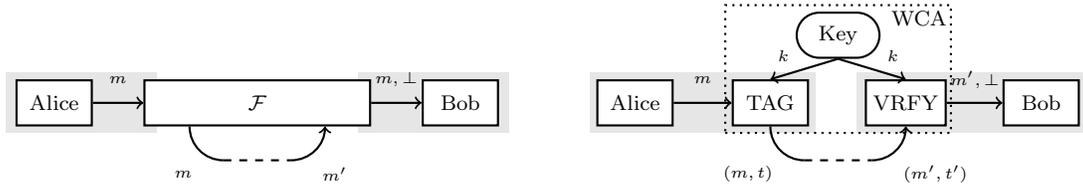

The ideal functionality of authentication, an \emph{authentic} channel
$\mF$, connects Alice and Bob in such a way that Bob can be certain
that any message output from the channel was sent by Alice. If the
message was modified on the channel, the symbol $\bot$ is delivered,
see Fig.~\ref{fig:functionality}. In other words, messages received from
$\mF$ are either authentic or blocked, and so \emph{cannot} be
successfully modified or substituted. Note that there is no
confidentiality requirement, so the message can be read by anyone.
The real implementation of authentication in the WCA scheme has three
components, as depicted in Fig.~\ref{fig:functionality}: a tag generation
algorithm TAG, a verification algorithm VRFY, and a key source. Both
TAG and VRFY use the same key.  From an input message $m$, Alice uses
TAG to compute a message-tag pair $(m,t)$ where $t=f_k(m)$ and $f_k$
is a hash function from an $\epsilon$-ASU$_2$ family identified by
$k$. Bob uses VRFY to verify a received message-tag pair $(m',t')$,
and VRFY outputs $m'$ if $f_k(m')=t'$ (for example if $m'=m$ and
$t'=t$), otherwise $\bot$.

The distinguisher (in UC terminology, the \emph{environment}) $\mZ$
should not be able to distinguish the two systems, except with low
probability. It can attempt to distinguish the two by controlling the
input to the system (the message $m$), and the output from the channel
$(m',t')$. The systems should be indistinguishable even under the
presence of an \emph{adversary} $\mA$, and it is sufficient to
consider the system under an adversary completely controlled by the
environment \cite{Canetti}, a \emph{dummy adversary} that only
forwards the desired channel output from the environment. As is, the
systems are trivially distinguishable because of the lack of a tag in
the ideal system. We therefore add a \emph{simulator} $\mS$ to the
ideal functionality, that adds a tag $t$ that is generated from $m$
using the appropriate key and hash function to make it
indistinguishable from the real case, and strips off any received tag
$t'$ after the channel. The name simulator also alludes to simulating
the adversary, and is especially simple when simulating the dummy
adversary.

We now want to ensure that the environment $\mZ$ cannot distinguish
between the two cases \emph{(a)} it is interacting with $\mA$ and
participants running the WCA scheme or \emph{(b)} it is interacting
with $\mS$ and participants running $\mF$, except with low probability
(see Fig.~\ref{fig:setup1}).  Perhaps we should point out that the
description here differs slightly from that of \cite{Portmann}. The
WCA scheme is resolved in somewhat finer detail and is separated from the
participants, and the ideal functionality is that of an authentic
channel rather than an immutable but blockable channel.  This is done
solely for the purpose of clear comparison of the real and the ideal
cases, and does not affect the results of the security evaluation.
Now, having set the stage, we can state our main theorem.

\begin{figure}[t]
\begin{center}
\begin{tikzpicture}[scale=.9,thick,every node/.style={minimum height=.6cm}]
  \scriptsize
   
  \draw (0,-3) node[draw,minimum width=6.5cm] (z) {$\mZ$};%
  \draw (-2.6,0) node[fill=black!10,minimum width=2cm,minimum height=.8cm]{};%
  \draw (2.6,0) node[fill=black!10,minimum width=2cm,minimum height=.8cm]{};%
  \draw (-3,0) node[draw,minimum width=1cm,fill=white](a) {Alice};%
  \draw (3,0)node[draw,minimum width=1cm,fill=white](b) {Bob};%
  \draw (0,-1.5)node[draw,minimum width=3cm](s) {$\mS$};%
  \draw (0, 0)node[draw,minimum width=3cm,fill=white](f) {$\mF$};%
  \draw (-2.8, -1.5)node[draw,rounded rectangle, minimum width=1.3cm](k)
  {Key};%

  \tiny
  \draw[->] (z.west) to[bend left=40] node[right]{$m$}
  (a.west); 
  \draw[->] (a) to node[above]{$m$} (f);
  \draw[->] (k) to node[above]{$k$} (s);
  \draw[->] ($(f.south)+(-1,0)$) to node[left]{$m$}
  ($(s.north)+(-1,0)$); 
  \draw[->] ($(s.south)+(-1,0)$) to node[left]{$(m,t)$}
  ($(z.north)+(-1,0)$); 
  \draw[->] ($(z.north)+(1,0)$) to node[right]{$(m',t')$}
  ($(s.south)+(1,0)$); 
  \draw[->] ($(s.north)+(1,0)$) to node[right]{$m'$}
  ($(f.south)+(1,0)$);%
  \draw[->] (f) to node[above]{$m,\bot$} (b);%
  \draw[->] (b.east) to[bend left=40] node[left] {$m',\bot$}
  (z.east); 

  \end{tikzpicture}
  \quad
\begin{tikzpicture}[scale=.9,thick,every node/.style={minimum height=.6cm}]
  \scriptsize
   
  \draw (0,-3) node[draw,minimum width=6cm] (z) {$\mZ$};%
  \draw (-2,0) node[fill=black!10,minimum width=3cm,minimum
  height=.8cm]{};%
  \draw (2,0) node[fill=black!10,minimum width=3.1cm,minimum
  height=.8cm]{};%
  \draw (-3,0) node[draw,minimum width=1cm,fill=white](a) {Alice};%
  \draw (3,0)node[draw,minimum width=1cm,fill=white](b) {Bob};%
  \draw (0,-1.5)node[draw,minimum width=3cm](e) {$\mA$};%
  \draw (0, 1)node[draw,rounded rectangle, minimum width=1.3cm](k)
  {Key};%
  \draw (-1,0)node[draw,rectangle,minimum width=1cm,fill=white](T)
  {TAG};%
  \draw (1,0)node[draw,rectangle,minimum width=1cm,fill=white](V)
  {VRFY};%
  \draw (0,.5) node[draw,dotted,minimum width=3cm,minimum
  height=1.7cm]{} +(1.2,.75)node{WCA}; %

  \tiny
  \draw[->] (z.west) to[bend left=40] node[right]{$m$}
  (a.west); 
  \draw[->] (a) to node[above]{$m$} (T);
  \draw[->] (T.south) to node[left]{$(m,t)$}
  ($(e.north)+(-1,0)$);
  \draw[->] ($(e.south)+(-1,0)$) to node[left]{$(m,t)$}
  ($(z.north)+(-1,0)$); 
  \draw[->] ($(z.north)+(1,0)$) to node[right]{$(m',t')$}
  ($(e.south)+(1,0)$); 
  \draw[->] ($(e.north)+(1,0)$) to node[right]{$(m',t')$}
  (V.south);
  \draw[->] (V) to node[above]{$m',\bot$} (b);%
  \draw[->] (b.east) to[bend left=40] node[left] {$m',\bot$}
  (z.east); 

  \draw[->] (k.south) to node[above right,at end]{$k$} (T.north);
  \draw[->] (k.south) to node[above left,at end]{$k$} (V.north);
      
\end{tikzpicture}
  \end{center}
  \caption{On the left is the ideal case: the ideal functionality
    $\mF$ and simulator $\mS$ complete with key input.  On the right
    is the real case: the WCA scheme and an adversary $\mA$.  The
    environment $\mZ$ wants to distinguish between the two given all
    the input and output from the system.}
  \label{fig:setup1}
\end{figure}
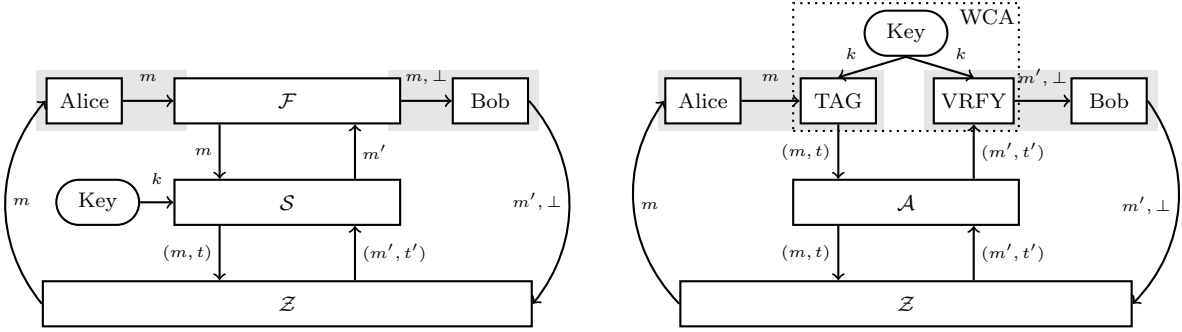

\begin{theorem} \label{thm-WCA-uc} \emph{(Indistinguishability)} No
  distinguisher $\mZ$ can distinguish between the two
  cases\nopagebreak

  \emph{(a)} it is interacting with $\mA$ and participants running the
  WCA scheme based on $\epsilon$-ASU$_2$ hashing using
  $\epsilon'$-perfect authentication key, or\nopagebreak

  \emph{(b)} it is interacting with $\mS$ and participants running
  $\mF$\nopagebreak 

  \noindent
  except with probability $\epsilon+\epsilon'$.
\end{theorem}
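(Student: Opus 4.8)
The plan is to reduce indistinguishability to a single ``bad'' event and then bound that event's probability directly with Lemma~\ref{lemma:subset}, the essential trick being to aggregate over all tag values \emph{before} invoking the trace-distance bound. First I would pin down the simulator $\mS$ precisely: on receiving $(m',t')$ from $\mZ$ it uses the shared key to test whether $f_k(m')=t'$, forwarding $m'$ to $\mF$ only when the pair verifies and otherwise signalling $\mF$ to block. The opening transcript is identical in both worlds, since $\mS$ produces the same tag $t=f_k(m)$ that TAG would, so the only possible difference is the symbol eventually delivered to Bob, which $\mZ$ also observes.

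Next I would run through the cases for the pair $(m',t')$ that $\mZ$ submits. If $m'=m$ and $t'=t$, both systems deliver $m$; if $f_k(m')\neq t'$, both deliver $\bot$ (the simulator blocks, VRFY rejects); and if $m'=m$ but $t'\neq t$ then $f_k(m')=f_k(m)=t\neq t'$, so again both deliver $\bot$. The sole discrepancy is the forgery event $E=\{m'\neq m \wedge f_K(m')=t'\}$: there VRFY outputs $m'$ in the real world, whereas $\mF$, being an authentic channel, can only output $\bot$. Hence the distinguishing advantage of $\mZ$ is bounded by $\Pr\{E\}$.

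It remains to bound $\Pr\{E\}$, and this is where the theorem earns its improved constant. Fixing $m$ to the worst case and conditioning on the environment's internal randomness, the response $(m',t')$ is a deterministic function of the view $(m,t)$; since each key $k$ determines $t=f_k(m)$, it also determines the response, which I write $(m'_t,t'_t)$. The keys on which $\mZ$ succeeds therefore form the disjoint union
\[
  W=\bigcup_{t:\,m'_t\neq m}\{k\in\mK: f_k(m)=t \wedge f_k(m'_t)=t'_t\},
\]
disjoint because $f_k(m)$ is single-valued. By Def.~3(a) each constituent has an ambient set $\{k:f_k(m)=t\}$ of size $|\mK|/|\mT|$, and by Def.~3(b) (applicable since $m'_t\neq m$) at most an $\epsilon$ fraction of those also satisfy $f_k(m'_t)=t'_t$; as there are at most $|\mT|$ such $t$, this gives $|W|\le\epsilon|\mK|$. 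Applying Lemma~\ref{lemma:subset} to the \emph{single} subset $W\subseteq\mK$ under the $\epsilon'$-perfect distribution $P_K$ yields
\[
  \Pr\{E\}=P_K(W)\le \frac{|W|}{|\mK|}+\epsilon'\le \epsilon+\epsilon'.
\]

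The main obstacle, and the conceptual heart of the result, is to resist conditioning on the observed tag $t$ before applying the trace-distance bound: doing so reproduces Theorem~\ref{thm-WCA} with its factor $|\mT|\epsilon'$, because conditioning divides by the small probability $P_K(\{k:f_k(m)=t\})\approx 1/|\mT|$ and inflates the $\epsilon'$ term. Pooling all favorable keys across every tag into one set $W$ first, and only then invoking Lemma~\ref{lemma:subset}, keeps the $\epsilon'$ contribution at its raw value of $\epsilon'$. The remaining care-points are routine: justifying the deterministic per-view response (handled by conditioning on $\mZ$'s randomness and taking the worst $m$), and verifying that the transcripts truly coincide off $E$, which the case analysis above supplies.
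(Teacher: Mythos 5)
Your proposal is correct and follows essentially the same route as the paper's proof: reduce distinguishability to the forgery event, fix the environment's randomness so that the response $(m',t')$ is a deterministic function of $(m,t)$, pool the successful keys across all tags into one disjoint union of size at most $\epsilon|\mK|$, and only then apply Lemma~\ref{lemma:subset} once, so the $\epsilon'$ term is not inflated by conditioning; randomized attacks are then handled by averaging. Your explicit simulator description and case analysis, and your formulation in terms of the key set $W$ rather than a probability of a union of events, are only presentational differences from the paper's trace-distance computation.
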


\begin{proof}
  In the proof, the message given to Alice is denoted $X$ and its
  distribution is in control of the environment $\mZ$.  The
  authentication key $K$ is used to select $f_K$ that in turn is used
  to generate the tag. The key distribution is not in control of
  $\mZ$, and has $\epsilon'$ trace distance to the uniform.  The
  corresponding output message-tag pair is denoted $Y$. The channel
  output is denoted $Y'$ and is again in control of $\mZ$. The output
  of the real and ideal functionality is denoted $\Xtilde$ and
  $\Xhat$, respectively and take values in $\mM\cup\{\bot\}$.  Thus,
  the environment $\mZ$ has access to the joint random variables
  $XYY'\Xtilde$ in the real case and $XYY'\Xhat$ in the ideal case. In
  both cases, $\mZ$ is in control of $X$ and $Y'$. The random variable
  $Y$ has an identical distribution (conditioned on the value of $X$)
  in both cases, so distinguishing the two systems can only be done
  from the output $\Xtilde$ or $\Xhat$, if the output is different
  from $X$ and also not $\bot$. This is only possible in the real
  implementation, and the probability of this is
  $\Pr\{\Xtilde\neq\bot\wedge\Xtilde\neq X\}$. This can also be
  studied through the trace distance between the two distributions
  \begin{equation}\label{eq:trace-WCA}
    \begin{split}
      \delta(P_{XYY'\Xtilde},P_{XYY'\Xhat}) =
      \frac{1}{2}\sum_{m,y,y',x'}
      \Big|P_{XYY'\Xtilde}\big(m,y,y',x'\big)-
      P_{XYY'\Xhat}\big(m,y,y',x'\big)\Big|.
    \end{split}
  \end{equation} 
  Above, the index $x'$ runs over $\mM\cup\{\bot\}$. Since the real
  and ideal cases are indistinguishable if $m=x'$, the above sum
  simplifies to the terms where $m\neq x'$. Furthermore, if $m\neq x'$
  the ideal functionality $\mF$ always outputs $\bot$.  We can
  therefore change the name of the index to $m'$ since it now runs
  only over $\mM$, and we find that the trace distance equals
  $\Pr\{\Xtilde\neq\bot\wedge\Xtilde\neq X\}$, because
  \begin{equation*}
    \label{eq:trace-WCA-2}
    \begin{split}
      \delta&(P_{XYY'\Xtilde},P_{XYY'\Xhat})= \sum_{m,y,y',m'\neq m}
      P_{XYY'\Xtilde}\big(m,y,y',m'\big)
      =\Pr\{\Xtilde\neq\bot\wedge\Xtilde\neq X\}\\
      &= \sum_{m,t,t',m'\neq m} 
      P_{X}(m)P_{Y|X}\big((m,t)|m\big)P_{Y'|XY}\big((m',t')\big|m,(m,t)\big)
      P_{\Xtilde|XYY'}\big(m'|m,(m,t),(m',t')\big)\\
      &= \sum_{m,t,t',m'\neq m} 
      P_{X}(m)\Pr\big\{h_K(m)=t\big\}P_{Y'|Y}\big((m',t')\big|(m,t)\big)
      \Pr\big\{h_K(m')=t'|h_K(m)=t\big\}\\
      &=\sum_{m,t,t',m'\neq m} P_{X}(m) 
      P_{Y'|Y}\big((m',t')\big|(m,t)\big)\Pr\{f_K(m')=t' \,\wedge\,
      f_K(m)=t\}.
    \end{split}
  \end{equation*}
  Now, the simple bound $\Pr\{f_K(m')=t'\,\wedge\,
  f_K(m)=t\}\le\epsilon/|\mT|+\epsilon'$ (from
  Lemma~\ref{lemma:subset}) only gives
  \begin{equation*}
    \begin{split}
      \delta(P_{XYY'\Xtilde},P_{XYY'\Xhat})
      &=\sum_{m,t,t',m'\neq m} P_{X}(m)
      P_{Y'|Y}\big((m',t')|(m,t)\big)\Pr\{f_K(m')=t' \,\wedge\,
      f_K(m)=t\} \\
      &\le \sum_{m,t,t',m'\neq m}
      P_{X}(m) P_{Y'|Y}\big((m',t')|(m,t)\big)
      \Big(\frac{\epsilon}{|\mT|}+\epsilon'\Big)=\epsilon+|\mT|\epsilon',
    \end{split}
  \end{equation*}
  and that is insufficient for our purposes. This occurs for the same
  reason as the high bounds in Theorems~\ref{thm-WCA}
  and~\ref{thm-WCA-cor}: the upper bound for the individual terms
  \emph{is} this high, but the bound is not reached for all $(m,t)$.
  Here, we can do better by bounding the expression
  \begin{equation*}
    \sum_{t,t',m'\neq m}
    P_{Y'|Y}\big((m',t')|(m,t)\big)\Pr\{f_K(m')=t' \,\wedge\, f_K(m)=t\}
  \end{equation*}
  instead of the individual terms. The probability
  $P_{Y'|Y}\big((m',t')|(m,t)\big)$ corresponds to the adversary's
  attack strategy: given a message-tag pair on the input to the
  channel, choose what to substitute as output from the channel. If
  the adversary uses a deterministic attack, meaning that $(m',t')$ are
  functions of $(m,t)$, we immediately obtain
  \begin{equation*}
    \begin{split}
      \sum_{t,t',m'\neq m}
      &P_{Y'|Y}\big((m',t')|(m,t)\big)\Pr\{f_K(m')=t'\,\wedge\,f_K(m)=t\}
      \\&
      = \sum_{t}\Pr\{f_K\big(m'(m,t)\big)=t'(m,t) \,\wedge\, f_K(m)=t\}\\
      &= \Pr\Big[\bigcup_t\big\{f_K\big(m'(m,t)\big)=t'(m,t) 
      \,\wedge\, f_K(m)=t\big\}\Big]\\
      &\le|\mT|\Big(\epsilon\frac{|\mK|}{|\mT|}\Big)\frac1{|\mK|}+\epsilon_1
      =\epsilon+\epsilon_1.
    \end{split}
  \end{equation*}
  The sum can be rewritten as the probability of a union because the
  events are disjoint, and the inequality is obtained from
  Lemma~\ref{lemma:subset}. The remaining average over $m$ has no
  effect on the bound. 
  
  If the adversary has a randomized attack, we can introduce an
  auxiliary probability space $(\Omega,\mF,\mu)$ for the random
  variable $Y'=(X',T')$, where $\Omega$ is the sample space, $\mF$ is
  the $\sigma$-algebra of events, and $\mu$ is the probability
  measure. Using the indicator function $\chi$ we can write
  \begin{equation}
    P_{Y'|Y}\big((m',t')|(m,t)\big)
    =\int_\Omega\chi_{\{\omega\in\Omega:Y'(m,t,\omega)=(m',t')\}}(\omega)\,d\mu.
  \end{equation}   
  We note that for each fixed sample $\omega$, the attack is
  deterministic.  The above approach now gives
  \begin{align*}
    \sum_{t,t',m'\neq m}&
    P_{Y'|Y}\big((m',t')|(m,t)\big)\Pr\{f_K(m')=t' \,\wedge\, f_K(m)=t\}\\
    &\,=\sum_{t,t',m'\neq m}\int_\Omega\chi_{\{\omega\in\Omega:
      Y'(m,t,\omega)=(m',t')\}}(\omega)\,d\mu\Pr\{f_K\big(m'\big)
    =t' \,\wedge\, f_K(m)=t\}\\
    &\,=\int_\Omega\sum_{t}\Pr\big\{f_K\big(X'(m,t,\omega)\big)
    =T'(m,t,\omega) \,\wedge\, f_K(m)=t\big\}\,d\mu\\
    &\le \int_\Omega \,\epsilon+\epsilon'\,d\mu = \epsilon+\epsilon'.
  \end{align*}
  Again, the remaining average over $m$ has no effect on the
  bound.\hfill$\square$
\end{proof}

Now, the UC security of the WCA scheme with a partially known 
key follows immediately.
\begin{corollary} \label{cor-WCA-uc}
\emph{(UC security)}
Consider the WCA scheme based on $\epsilon$-ASU$_2$ hashing. 
Assume that the authentication key $k$ is $\epsilon'$-perfect. 
Then the WCA scheme is $\epsilon+\epsilon'$-UC-secure. 
\end{corollary}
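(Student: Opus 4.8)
The plan is to observe that Corollary~\ref{cor-WCA-uc} is essentially a translation of Theorem~\ref{thm-WCA-uc} into the vocabulary of the UC framework, so the task is to line up the definition of $\delta$-UC-security with what the indistinguishability theorem already delivers. Recall that a protocol is $\delta$-UC-secure if there exists a simulator $\mS$ such that, for every environment $\mZ$, the real execution (with $\mZ$ interacting with the adversary $\mA$ and the WCA participants) and the ideal execution (with $\mZ$ interacting with $\mS$ and $\mF$) are statistically indistinguishable except with probability at most $\delta$. I would first invoke the standard reduction to the dummy adversary \cite{Canetti}: since it suffices to consider an adversary that merely relays $\mZ$'s instructions, the two configurations are exactly those of Fig.~\ref{fig:setup1}, for which $\mS$ is the simulator described just before Theorem~\ref{thm-WCA-uc} (attach a tag $t=f_k(m)$ on the way in, strip the received $t'$ on the way out).

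Next I would make explicit that the environment's optimal distinguishing advantage between the two executions equals the trace distance between the joint output distributions $P_{XYY'\Xtilde}$ and $P_{XYY'\Xhat}$, since $XYY'\Xtilde$ (respectively $XYY'\Xhat$) is precisely the complete view available to $\mZ$ in the real (respectively ideal) case, and the best statistical test has advantage equal to that trace distance. Theorem~\ref{thm-WCA-uc} bounds exactly this quantity by $\epsilon+\epsilon'$ whenever the hashing is $\epsilon$-ASU$_2$ and the key is $\epsilon'$-perfect. Setting $\delta=\epsilon+\epsilon'$ then yields the claim directly, with no separate computation required.

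The step needing the most care is verifying that the constructed $\mS$ is a \emph{legitimate} UC simulator and that the information-theoretic trace-distance bound matches the UC notion of indistinguishability rather than a weaker computational one. The key point is that $\mS$ is handed the same key source as the honest participants and does not need the key to be secret from itself; it merely reproduces the tag distribution that $\mZ$ expects, so its existence is unconditional and independent of $\mZ$. Because the bound of Theorem~\ref{thm-WCA-uc} is a genuine statistical distance, it holds against every (even computationally unbounded) environment, so the resulting guarantee is full $(\epsilon+\epsilon')$-UC-security, obtained without any appeal to the Composability Theorem, as promised in the introduction.
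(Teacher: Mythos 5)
Your proposal is correct and takes essentially the same route as the paper: the paper offers no separate argument for Corollary~\ref{cor-WCA-uc}, stating only that UC security ``follows immediately'' from Theorem~\ref{thm-WCA-uc}, which is precisely the translation you carry out (dummy-adversary reduction, the simulator of Fig.~\ref{fig:setup1} with key access, and the identification of distinguishing advantage with the trace distance bounded by $\epsilon+\epsilon'$). Your extra care in checking that $\mS$ is a legitimate simulator and that the bound is statistical rather than computational simply makes explicit what the paper leaves implicit.
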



\section{Conclusions}
\label{sec:conclusions}
We have presented a detailed security analysis of Wegman-Carter
authentication with failure probability $\epsilon$, in the case of
partially known key whose distribution is $\epsilon'$ trace distance
from the uniform distribution.  We proved tight upper bounds for the
adversary's success probability of breaking the scheme with
impersonation and substitution attacks in the information-theoretic
setting, with success probability upper bounded by $1/|\mT|+\epsilon'$
and $\epsilon+|\mT|\epsilon'$, respectively. The latter is
substantially higher than expected, but we give an example that
reaches the bound, meaning that the bound is sharp. Also in terms of
trace distance, a similar increase can be noted. The best possible
upper bound to the trace distance after having seen a valid
message-tag pair is $|\mT|\epsilon'$; the same example tells us that
this bound is sharp.

Since the bounds we obtained are substantially higher than what one
would expect, we also analyze whether the scheme is secure in terms of
witness indistinguishability.  Despite the high success probability
bound and increase in trace distance, we prove that the authentication
under study is indeed indistinguishable from the ideal functionality,
except with probability less than $\epsilon+\epsilon'$.  We provide a
direct proof for the case of partially known key, without using the
composability theorem. Naturally, UC security of the scheme with
partially known key follows from our proof of indistinguishability.

These results seem to contradict each other, but they do not. The
first should be understood as pointing out that the attacker will have
high success probability in some rounds, after having seen a valid
message-tag pair. The second shows that this happens seldom enough to
retain the expected security. The important lesson is that the
attacker can refrain from performing an active attack, if the success
probability is low after having seen a valid message-tag pair. This is
because she can calculate her success probability from available
knowledge on the key and the additional information obtained from a
valid message-tag pair. In essence she does not need to reveal herself
at each attempt to break the system, but needs only take this risk
when the success probability is high.  The security parameters should
not be read as ``the probability that an attacker is revealed, in each
attack'' but rather ``the probability that the system is broken, in
each round.'' It is important to keep this in mind when using this
type of authentication, and of course, the size of the security
parameters $\epsilon$ and $\epsilon'$ should be chosen accordingly.

\bibliographystyle{splncs03}
\bibliography{wca}

\end{document}